\documentclass{article}
\usepackage[T1]{fontenc}
\usepackage{iclr2027_conference,times}
\usepackage{amsmath,amssymb,amsthm,mathtools}
\usepackage{graphicx,booktabs,array,multirow,float}
\usepackage{wrapfig,pifont}
\usepackage[section]{placeins}
\usepackage{microtype,xcolor,url,hyperref}
\usepackage{tikz}
\usetikzlibrary{positioning,arrows.meta,fit,backgrounds,calc}

\definecolor{dsblue}{HTML}{2A78D6}
\definecolor{dsorange}{HTML}{EB6834}
\definecolor{dsgray}{HTML}{8A8984}

\hypersetup{colorlinks=true,linkcolor=black,citecolor=black,urlcolor=black,
  pdftitle={DeepInfer-Mac: Running a 750B-Class MoE on a 24 GB Laptop},
  pdfauthor={Haorui Li}}
\title{DeepInfer-Mac:\\Running a 750B-Class MoE on a 24\,GB Laptop}
\author{%
\textbf{Haorui Li}\\
Southeast University\\
\href{mailto:haoruileee@gmail.com}{\texttt{haoruileee@gmail.com}}
}
\iclrfinalcopy 
\makeatletter
\renewcommand{\@maketitle}{\vbox{\hsize\textwidth\centering
  \def\@makefnmark{\hbox{\@textsuperscript{\normalfont\@thefnmark}}}%
  {\LARGE\scshape\@title\par}
  \def\And{\end{tabular}\hspace{2em}%
    \begin{tabular}[t]{c}\rule{\z@}{24pt}\ignorespaces}%
  \begin{tabular}[t]{c}\rule{\z@}{24pt}\@author\end{tabular}\par
  \vskip 0.6em
  {\normalfont\small\url{https://github.com/haoruilee/DeepInfer-Mac}\par}
  \vskip 0.3in minus 0.1in
}}
\makeatother
\newcommand{\method}{\textsc{DeepInfer-Mac}}
\newcommand{\relu}[1]{\left[#1\right]_{+}}

\newcommand{\bfround}{\operatorname{bf16}}
\newcommand{\q}{\mathbf q}
\newcommand{\kv}{\mathbf k}
\newcommand{\RR}{\mathbb R}
\newtheorem{proposition}{Proposition}
\newtheorem{lemma}{Lemma}
\newcommand{\GenerationSpeedRange}{0.21--0.72}
\newcommand{\GenerationPeakFootprint}{16.73}
\newcommand{\GenerationCache}{4.495}
\newcommand{\LongPromptTokens}{3{,}764}
\newcommand{\LongPromptTTFT}{309.0}
\newcommand{\SwapStart}{6.54}
\newcommand{\SwapEnd}{13.36}
\newcommand{\MatchedTokens}{108}
\newcommand{\MatchedBlocks}{9}
\newcommand{\MatchedCache}{4.495}
\newcommand{\MatchedFinding}{Adding next-layer speculation changes aggregate throughput from 0.92 to 1.11 tokens/s, a 21.0\% increase relative to overlap alone.}
\newcommand{\SpeculationGain}{21.0}
\newcommand{\SpeculationExtraPayload}{5.4}
\newcommand{\RetrievalOutcome}{Both retrieval requests return the planted access code exactly.}
\newcommand{\GenerationRows}{English & 22 & 69 & 65.0 & 0.21 & 12.88 \\
Python & 27 & 25 & 26.6 & 0.25 & 12.94 \\
Chinese & 25 & 57 & 28.2 & 0.72 & 12.94 \\
Retrieval 1K & 971 & 2 & 145.1 & -- & 14.64 \\
Retrieval 4K & 3764 & 2 & 309.0 & -- & 16.73 \\}
\newcommand{\MatchedRows}{Serial & 0.77 & 1.00 & 2.02 \\
Overlap resident experts & 0.92 & 0.97 & 2.01 \\
Add next-layer speculation & 1.11 & 0.80 & 2.12 \\}

\begin{document}
\maketitle
\lhead{} 
\renewcommand{\headrulewidth}{0pt}
\raggedbottom
\begin{abstract}
DeepInfer-Mac runs the 750B-class DeepSeek-V4.1-Flash model on a MacBook Air with 24\,GB of unified memory. A mixed low-bit conversion occupies 338\,GiB on the internal SSD. About 7.6\,GiB of weights remain resident; routed experts and Engram rows are loaded when needed. This separation makes the model executable, but it leaves each generated token waiting for a succession of storage reads. We organize expert weights so that reads fill reusable GPU buffers directly, and couple cache allocation with the lifetime of those buffers. In end-to-end runs, the system generates English, Chinese and code at \GenerationSpeedRange{} tokens per second, with a largest sampled process footprint of \GenerationPeakFootprint{}\,GiB across the tested requests. Matched decode comparisons show a \SpeculationGain{}\% throughput gain from next-layer prefetching over resident-expert overlap alone, at the cost of additional reads. Longer prompts remain costly. We describe the design, provide the implementation and measurement scripts, and identify the limits of this operating point: a model far larger than memory can run locally, while storage and prefill still govern the user's wait.
\end{abstract}

\section{Introduction}
A 750B-class model appears far beyond a laptop with 24\,GB of memory. Even after compression, the checkpoint used in this work occupies 338\,GiB. Yet a mixture-of-experts model does not use all of those parameters for each token. DeepSeek-V4.1-Flash selects six of the 384 routed experts in each of its 40 layers \citep{deepseek41flash}. Most of the model can stay on disk while a much smaller working set moves through memory.

This paper follows that observation through to a running system. \method{} executes the model on a single MacBook Air, using its internal SSD to hold routed experts and Engram tables. Attention, routers and other frequently used weights remain in unified memory. A selected expert is read directly into a reusable GPU buffer, computed, and retained until its space is needed. The model's size determines the storage requirement; the active experts and their reuse determine the cost of generating a token.

Making the working set fit is only the beginning. A GPU cluster can keep many tokens in flight and overlap their communication with computation \citep{deepinfer,deepseekv3}. On this laptop, one token passes through one layer at a time. The router must run before the system knows which experts to read, and those reads must finish before the missing experts can be computed. The next layer then repeats the same sequence. The SSD has enough capacity for the model, but its bandwidth and this dependency chain govern the pace of generation.

Our design spends memory to shorten that chain. The dense weights stay resident, expert reads avoid an extra user-space copy, and repeated expert use is served from a bounded cache. The same budget also has to cover buffers whose GPU work has not finished. We therefore couple cache allocation, expert batching and buffer lifetime. This makes the memory constraint enforceable even when optional reads are still in flight.

The resulting system generates text on the target laptop. Runs with \GenerationCache{}\,GiB of reserved expert capacity produce English, Chinese and Python at \GenerationSpeedRange{} tokens per second. A \LongPromptTokens{}-token retrieval prompt completes in \LongPromptTTFT{} seconds to the first token. These measurements use full prefill and local weights throughout. They also show why feasibility needs to be reported alongside latency: fitting the working set does not make a long prompt cheap.

Once the system fits, reading the next layer's likely experts can hide some of the remaining wait. A controlled comparison restores the same model state, warmup and continuation before each schedule. Adding demand-first speculation increases aggregate decode throughput by \SpeculationGain{}\% over resident-expert overlap in these matched blocks, while reading \SpeculationExtraPayload{}\% more expert data. The improvement comes from moving useful transfers earlier, with enough capacity reserved for the next demand group. The system exposes all three schedules so that the tradeoff can be tested on other requests and machines.

The contribution is a complete implementation and a measured account of this operating point: a 750B-class MoE, one 24\,GB laptop, and SSD-backed execution. We explain which data must remain resident, how the rest can be supplied safely, and what limits the resulting performance. Operator-level comparisons with the reference and a study of indexer arithmetic support the implementation. The implementation, converted weight layout and raw measurements are available for others to reproduce and extend this case study.

\begin{figure}[htbp]
\centering
\begin{tikzpicture}[
 font=\small,
 box/.style={draw=black!45, rounded corners=3pt, align=center, minimum height=1.05cm, inner sep=5pt},
 arr/.style={-{Stealth[length=5pt]}, line width=0.8pt},
 node distance=0.45cm]
 \node[box, fill=dsblue!8, text width=3.0cm] (dense) {Attention and routing\\Resident weights};
 \node[box, fill=dsblue!8, text width=3.0cm, right=of dense] (cache) {Selected experts\\Bounded GPU buffers};
 \node[box, fill=dsblue!8, text width=3.0cm, right=of cache] (compute) {Expert computation\\Next layer};
 \node[box, fill=dsorange!10, text width=7.2cm, below=1.0cm of cache] (disk) {Internal SSD: 338\,GiB checkpoint\\Routed experts and Engram rows remain on disk};
 \draw[arr] (dense) -- (cache);
 \draw[arr] (cache) -- (compute);
 \draw[arr, dsorange] (disk.north) -- node[right, font=\footnotesize] {read missing weights} (cache.south);
 \node[above=0.18cm of cache, font=\small\bfseries] {24\,GB unified memory, one GPU};
\end{tikzpicture}
\caption{The full checkpoint lives on the SSD. Memory holds the frequently used weights and a changing set of experts. The router determines which expert records enter the buffers; the GPU consumes them in place. Engram lookup follows a separate row cache.}
\label{fig:overview}
\end{figure}

\section{From model size to working set}
\label{sec:background}
The distinction between stored parameters and active parameters is what makes this experiment possible. The checkpoint contains 15,360 routed experts, but a decoding token selects 240 of them across the model. Their weights amount to several GiB, and a cache can avoid reading some of them again. Attention, routing, shared experts and the output head require about 7.6\,GiB of resident weights. This leaves room for expert reuse and temporary computation within the laptop's physical memory.

There are two other substantial parts of the model. Compressed sparse attention stores a compressed history and uses an indexer to choose the positions a query attends to. Engram looks up hashed $n$-grams in large embedding tables. Its tables occupy 114.4\,GiB after conversion, but a lookup reads only the rows addressed by the current tokens. Both mechanisms separate a large stored state from the portion needed immediately. Table~\ref{tab:model} summarizes the model and machine; Appendix~\ref{app:format} gives the full storage layout.

\begin{table}[htbp]
\centering\small
\caption{The model, checkpoint and target machine. Storage sizes are binary GiB; 24\,GB is the machine's marketed memory configuration. The parameter count excludes quantization metadata (Appendix~\ref{app:parameter-count}).}
\label{tab:model}
\begin{tabular}{@{}lp{0.64\linewidth}@{}}
\toprule
Model & DeepSeek-V4.1-Flash, 748.5B text weights, 40 layers \\
Routed computation & 6 of 384 experts per layer, plus one shared expert \\
Converted checkpoint & 337.6\,GiB; mixed 2/3/4-bit routed experts and 4-bit Engram rows \\
Resident weights & 7.6\,GiB; mainly attention, shared experts and routing \\
Machine & MacBook Air M4, 10-core GPU, 24\,GB unified memory \\
Storage & 1\,TB internal SSD \\
Workload & One sequence at a time; inference runs locally \\
\bottomrule
\end{tabular}
\end{table}

Prior work has established that sparse activation can make flash storage useful for inference \citep{llmflash,powerinfer,moeoffload}. The question here is how far that separation carries a much larger MoE on a small unified-memory machine. A server sends token activations to devices that already hold the experts. This machine has only one GPU, so it brings the experts to the token. That reverses the direction of data movement and makes the storage path central to the design.

The resource tradeoff is direct. A larger expert cache avoids more storage reads, but takes memory from activations, sequence state and the operating system. More simultaneous reads may keep the SSD busy, but they also keep more buffers unavailable for reuse. The runtime must manage these demands together. Otherwise a nominally small cache can still leave too little room to execute the next layer.

\section{Keeping the token moving}
\label{sec:design}
The design follows one token from routing to expert computation. Its aim is to make each required byte available with little additional storage or copying, then retain that byte when reuse is likely.

\subsection{A direct path from SSD to GPU}
\label{sec:fetch}
Each routed expert is stored as one contiguous record, already arranged in the quantized layout consumed by MLX \citep{mlx}. A cache miss assigns the record a slot in preallocated GPU arrays. The read writes directly into that slot, and the GPU later uses the same memory. Unified memory removes the need for an additional user-space staging copy. The expert files use uncached reads to avoid retaining a second copy in the operating system's page cache.

The router selects the experts before the host issues missing reads. Experts that are already resident can be computed while the others arrive. Once their buffers are ready, the missing experts pass through the same gathered matrix operations. This preserves the model's selected experts while changing when their work is issued. The on-disk arrangement and execution layout therefore serve the same purpose: an expert moves from storage to computation without being unpacked into another full-sized weight buffer.

\subsection{Memory that can actually be reused}
\label{sec:replication}
A cache entry is useful only if keeping it does not prevent the next request from running. This matters under lazy GPU execution: a host call may have returned while the GPU still refers to its weights. Replacing those weights too early can change a computation already in flight.

We keep one global expert-buffer budget and account for the different record sizes used by different layers. The available slots determine how many experts a prefill batch may contain. Demand acquisition protects both hits and newly loaded experts together, and that protection remains until the relevant GPU work is safe. Optional reads are admitted only while enough slots remain for the next demand group. Appendix~\ref{sec:cache-budget} states the capacity invariant and its assumptions.

Within those constraints, replacement is least recently used. A hit saves an entire expert read, so cache capacity has a direct effect on the token's storage demand. The budget bounds expert buffers rather than the whole process: dense weights, attention state, temporary arrays and the operating system still need memory. We measure the process separately instead of treating a cache setting as a memory measurement.

\subsection{The limits of reading ahead}
\label{sec:chain}
Routing creates a dependency that storage bandwidth alone cannot remove. A layer runs attention and its router, reads any missing experts, and computes their outputs before the next layer can proceed. A useful first approximation is
\begin{equation}
 T_{\mathrm{token}} \simeq T_{\mathrm{compute}}
       + \frac{D_{\mathrm{miss}}}{B_{\mathrm{SSD}}}
       - T_{\mathrm{overlap}},
\label{eq:chain}
\end{equation}
where $D_{\mathrm{miss}}$ is the number of expert bytes actually fetched and $T_{\mathrm{overlap}}$ is the transfer time hidden by independent computation. The expression identifies two different opportunities: reuse reduces the bytes required, while scheduling may hide part of the time needed to deliver them.

Computing cached experts supplies some independent work, but a miss-heavy layer offers little of it. Predicting a later layer's experts gives the SSD a longer head start. We use the established next-layer router predictor and demand-first ordering of \citet{moeoffload}: guesses are issued after the current layer's demand reads have completed. Speculative entries receive low insertion priority, an established way to limit cache pollution \citep{qureshi2007}. The budget and lifetime rules apply to these reads as well.

A prediction can still be costly after those safeguards. An unnecessary read uses bandwidth, and a read that is already in flight can outlast the idle interval that motivated it. Demand-first admission guarantees neither lower latency nor freedom from contention with a later layer. We therefore evaluate speculation against a schedule that simply overlaps resident expert computation with missing reads. Section~\ref{sec:parallel-eval} compares them on identical work.

\section{Executing the complete model}
\label{sec:impl}
Attention state, Engram lookup and the residual mixers share the machine with the expert cache. \method{} implements the reference model's compressed attention, indexers, Engram lookup and Hyper-Connections in MLX, with a common forward path for prefill and decode. A streaming converter reads the released tensors, converts them one at a time and writes the final checkpoint without first storing a second complete copy of the release.

The dense FP8 tensors retain their values when converted to MLX's MXFP8 representation. Routed experts use affine quantization at two to four bits per weight, selected by layer; Engram tables use four-bit rows. The conversion is consequently lossy as a whole. The per-layer precision choices trade reconstruction error against storage and read volume. Appendix~\ref{app:format} records the formats and allocation sizes.

Long prompts introduce a different working set. Many tokens pass through a layer together, so prefill can touch most of its experts. We process those experts in batches that fit the same cache budget. The engine also supports bounded replay, inherited from the upstream execution strategy, which omits much of the later-layer work on older prompt tokens. Because truncating this computation is an approximation, the end-to-end measurements use full prefill. Prefix reuse and the optional replay mode remain available for applications that make different latency tradeoffs.

Correctness checks follow the boundaries of the implementation. Small configurations run against the official PyTorch reference to test attention, routing, state updates and the vision encoder. Other tests force buffer eviction races, unfinished reads and repeated state restoration. These checks establish behavior for the exercised cases; they do not make the quantized checkpoint equivalent to the release. English, Chinese, code and retrieval requests supply complementary evidence that the complete text path runs with real weights. Additional interfaces and the speculative decoder are available in the repository, but are not needed for the central result.

\section{What the laptop can do}
\label{sec:eval}
Once the model fits, latency becomes the visible constraint. We begin with complete requests, then isolate the effect of scheduling on their decode steps. The end-to-end and matched runs use the M4 laptop in Table~\ref{tab:model}, local Engram tables, full prefill and the same converted checkpoint. Other desktop applications remain open; this is a shared-machine measurement rather than an isolated hardware benchmark.

\subsection{Generation within the memory budget}
\label{sec:e2e}
Table~\ref{tab:results} reports requests in English, Chinese and Python, followed by retrieval from longer prompts. Each request starts with an empty expert cache and a fresh sequence state. We count decode throughput from the intervals after the first emitted token, so a short answer does not turn its prefill work into an apparent decoding speedup.

\begin{table}[htbp]
\centering\small
\caption{End-to-end requests with \GenerationCache{}\,GiB of reserved expert buffers. TTFT is time to the first token after model construction. Footprint is the largest process value sampled at 0.5-second intervals during the request. Retrieval outputs are too short for a useful throughput comparison.}
\label{tab:results}
\begin{tabular}{@{}lrrrrr@{}}
\toprule
Request & \shortstack{Prompt\\tokens} & \shortstack{Output\\tokens} & \shortstack{TTFT\\(s)} & \shortstack{Decode\\(tok/s)} & \shortstack{Footprint\\(GiB)} \\
\midrule
\GenerationRows
\bottomrule
\end{tabular}
\end{table}

The three short requests generate \GenerationSpeedRange{} tokens per second. Their outputs include a physical explanation, a Python function and a Chinese explanation; the complete token sequences and text are preserved in the logs. These examples demonstrate operation across several kinds of input. Their timing also reflects a changing machine state, so the rows are not a comparison of language or task speed. \RetrievalOutcome{} The larger retrieval request has \LongPromptTokens{} input tokens and takes \LongPromptTTFT{} seconds to produce its first token.

The working set remains much smaller than the checkpoint. Across these requests, the largest sampled process footprint is \GenerationPeakFootprint{}\,GiB, while the checkpoint occupies 337.6\,GiB on disk. Figure~\ref{fig:working-set} shows how footprint changes during prefill and generation. Footprint, expert-buffer capacity and MLX allocations measure different things: the dense weights plus the configured cache are not a sufficient estimate of the peak.

\begin{figure}[htbp]
\centering
\includegraphics[width=\linewidth]{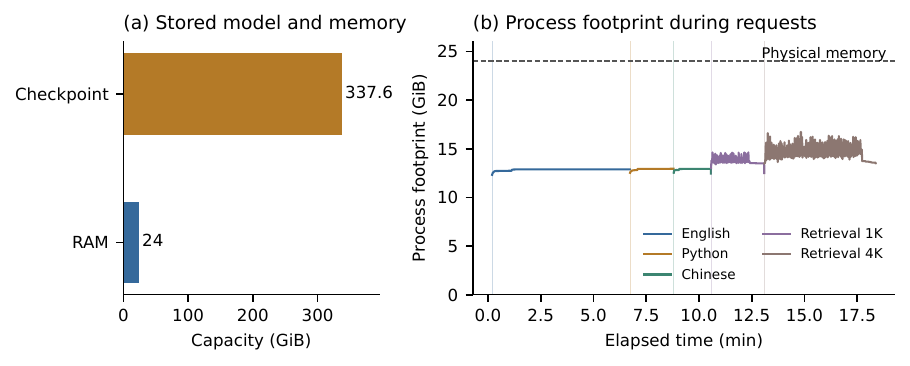}
\caption{Checkpoint storage and measured working memory. Left: the stored checkpoint is much larger than physical memory. Right: process footprint during the requests, with request boundaries marked. The footprint includes more than the expert cache and is sampled rather than an exact peak measurement. Host swap activity is reported separately.}
\label{fig:working-set}
\end{figure}

Host swap was active before model loading and grew during the suite. Its change cannot be assigned entirely to the inference process on this shared desktop. The experiment establishes execution on a 24\,GB machine with a measured process footprint; it does not establish swap-free operation. Appendix~\ref{app:measurement} gives the memory accounting and host counters.

Long prompts expose the other part of the tradeoff. During decode, a token selects only a few experts in each layer. A prompt contains many routing decisions, so prefill can visit most experts and must also build its attention state. A small resident working set therefore coexists with substantial startup latency. The full-prefill retrieval requests make that cost visible instead of hiding it behind the optional bounded-replay approximation.

\FloatBarrier
\subsection{Trading storage traffic for latency}
\label{sec:parallel-eval}
The dependency chain in Section~\ref{sec:chain} leaves intervals in which storage and computation can overlap. We compare three schedules: serial execution, computing resident experts while missing experts load, and adding demand-first next-layer speculation. Each measured block starts from the same prompt snapshot, the same serial warmup and the same expert-cache state. Its input continuation is fixed, and the generated token ids must agree. Configuration order rotates across rounds.

\begin{table}[htbp]
\centering\small
\caption{Matched-work schedule comparison. Each schedule processes \MatchedTokens{} measured tokens across \MatchedBlocks{} blocks, using \MatchedCache{}\,GiB of reserved expert capacity. Throughput is total measured tokens divided by their elapsed time. The ratio is the median per-block elapsed time relative to the corresponding serial block; lower is faster.}
\label{tab:matched}
\begin{tabular}{@{}lrrr@{}}
\toprule
Schedule & Tokens/s & \shortstack{Relative\\time} & \shortstack{Expert payload\\(GiB/token)} \\
\midrule
\MatchedRows
\bottomrule
\end{tabular}
\end{table}

\MatchedFinding{} Expert payload grows by \SpeculationExtraPayload{}\%, and all compared schedules return the same measured output tokens. Figure~\ref{fig:matched} shows the incremental effect of speculation in each paired block. It is faster than overlap alone in all nine blocks, though the gain varies between prompts. These warmed blocks isolate scheduling; their rates are distinct from the complete-request measurements above. The result supports the measured schedules in this configuration; it does not rank them across SSDs, cache sizes or concurrent workloads.

\begin{figure}[htbp]
\centering
\includegraphics[width=\linewidth]{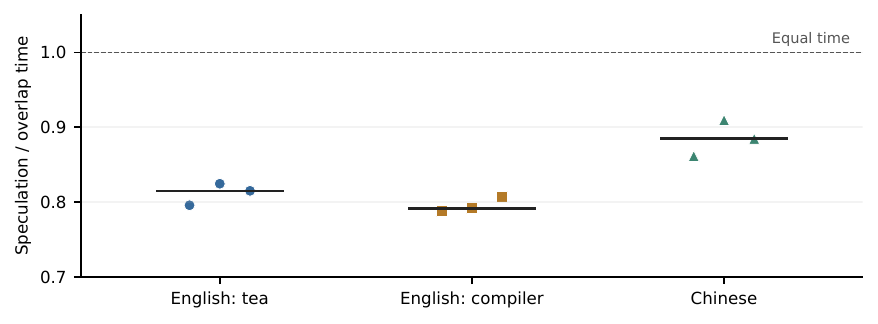}
\caption{The incremental effect of next-layer speculation, relative to resident-expert overlap alone. Each point is one prompt--round pair with the same continuation and initial cache state; short horizontal bars mark medians. The dashed line at one marks equal time. Rotation balances schedule position, but does not eliminate thermal or shared-machine effects.}
\label{fig:matched}
\end{figure}

This is the central scheduling tradeoff on the laptop. A cache hit eliminates a necessary transfer. A speculative read instead places a bet that the transfer will be useful soon enough to repay its bandwidth and residency cost. Keeping the SSD occupied is therefore an incomplete objective. The useful quantity is the time removed from the token's critical path after accounting for that extra work.

\subsection{What the correctness evidence covers}
\label{sec:quality}
The implementation passes 78 tests with the official reference available. The suite covers operator comparisons on small configurations, cache and buffer lifetime, sampling, state restoration and serving behavior. The matched runs add a check that scheduling changes preserve the measured tokens on the full checkpoint. Neither check establishes the quality loss introduced by converting the original weights.

An older teacher-forced evaluation compared the converted model with API continuations. The API vectors and the checkpoint came from different release dates, so their disagreement mixes model revision and compression effects. Appendix~\ref{app:api} preserves the per-case results. A same-revision, broader quality evaluation remains necessary before claiming that the low-bit checkpoint preserves the original model's capabilities.

\section{When computation becomes visible}
\label{sec:indexer}
Expert reads dominate much of decoding, while long prompts also expose the cost of choosing attention keys. The indexer scores many query--key pairs before selecting a small set. Its inputs have already been quantized to FP4 values with power-of-two scales, which creates an opportunity to use faster arithmetic without introducing another quantization step.

For the model's 128-element dot products, we give a sufficient condition under which BF16 multiplication with FP32 accumulation produces the same final BF16 value as the reference FP32 computation. The condition bounds the range of the block scales and excludes underflow and overflow. The discrete inputs then make every partial sum exactly representable in FP32. Appendix~\ref{app:exact} gives the proof and the boundary of the guarantee; the runtime does not yet enforce it with a guard and fallback for every input.

On captured inputs from a 32K-token prompt, the BF16 path matched 1.078 billion scores. The matrix multiplication was 1.5--1.9 times faster in the measured blocks, and fusing the subsequent head reduction removed repeated passes over the intermediate scores. These are local improvements, whose end-to-end value depends on how much time a request spends in the indexer.

Avoiding some scores altogether was less useful in our exploratory study. DoSA \citep{dosa} uses bounds to reject keys before full scoring, but the model's existing candidate selection leaves a smaller domain to prune. Our adaptation saved only a modest fraction of arithmetic on the captured queries, and its heuristic BF16 allowance is not a sound general bound. Screening therefore remains disabled. The detailed measurements in Appendix~\ref{app:indexer} explain this decision without making it part of the throughput claim.

\section{Related work}
\label{sec:related}
Running a model larger than memory has a substantial history. FlexGen schedules weights, activations and KV state across GPU, CPU and disk to serve large batches \citep{flexgen}. LLM in a Flash exploits activation sparsity and storage-friendly layouts to load only needed weights \citep{llmflash}. PowerInfer places frequent neurons on the GPU and computes others on the CPU \citep{powerinfer}. These systems establish the broader strategy of trading storage or host computation for scarce accelerator memory. Our contribution is its realization and measurement for this 750B-class model on a 24\,GB unified-memory laptop.

MoE offloading makes the working set explicit through expert routing. \citet{moeoffload} combine an expert cache with next-layer prediction and demand-first speculative loading. MoE-Infinity and ProMoE develop expert caching and prefetching, including coordination between prediction and inference \citep{moeinfinity,promoe}. Pre-gated MoE changes routing to expose later expert choices earlier \citep{pregated}, while Fiddler uses CPU expert computation to reduce transfers \citep{fiddler}. We adopt established caching and prediction ideas; the single-sequence SSD setting changes their costs. Our budget and lifetime protocol couples their allocations so that outstanding reads and GPU work leave enough capacity for the next demand group.

Several open-source projects already bring DeepSeek-V4-generation models to Apple silicon. ds4 provides a C and Metal engine and the API-sampled vectors used in our earlier checks \citep{ds4}. flash-moe streams experts from SSD, and MLX ports target V4-Flash and V4.1-Flash \citep{flashmoe,ssdmoe,pipenetwork}. \method{} builds on the official reference structure and MLX, with local Engram lookup, a streaming converter and a common sequence-state path. This paper provides an auditable account of its resource use and scheduling behavior at 24\,GB. A common cross-system benchmark remains future work.

The indexer study addresses a separate cost within sparse attention. GVR, LiteTopK and DoSA reduce selection or scoring work while preserving specified selection behavior \citep{gvr,litetopk,dosa}. Accurate GEMM on lower-precision matrix units is also an established line of work, including the Ozaki scheme \citep{mukunoki2025}. Our measured BF16 path uses the discrete structure already present in the model inputs, and its equality proof has explicit conditions. The screening study is an exploratory application of existing ideas, with the numerical limitations documented in the appendix.

\section{Discussion and conclusion}
\label{sec:conclusion}
A small laptop can execute a model whose stored parameters are far larger than its memory when the active working set is small enough and the remaining weights can be supplied from storage. \method{} realizes that separation for DeepSeek-V4.1-Flash. The implementation runs the full text path, and the measurements report the memory and latency required to do so.

The laptop makes the model locally accessible at the cost of storage traffic and a substantial wait for longer prompts. More concurrency does not remove the dependency between routing and expert availability; it changes which transfers can happen before they are needed. Memory spent retaining useful experts can avoid reads. Memory and bandwidth spent on wrong predictions can instead delay the next layer. The runtime must account for both effects within the same budget.

The evidence is limited to one laptop, one converted checkpoint and a single sequence at a time. The desktop is shared, host swap is active, and the generated examples are not a broad quality benchmark. The low-bit conversion and optional bounded replay also need separate quality evaluation; the latter is disabled in these runs. The code and logs allow these limits to be tested on other machines and workloads. Measurements on a second storage device and a same-revision quality reference would help separate hardware sensitivity from compression loss.

\label{lastmainpage}
\section*{AI use statement}
Generative AI tools assisted with code, experiments and manuscript drafting. The accompanying logs and tests document the checks performed. The author takes responsibility for the final manuscript and implementation.

\section*{Reproducibility statement}
An archived snapshot of the implementation, converter, regression tests and measurement logs accompanies this preprint as ancillary material. Development continues at \url{https://github.com/haoruilee/DeepInfer-Mac}. Appendix~\ref{app:measurement} gives the commands, measurement definitions and source fingerprints. Tables and plots are generated directly from the saved measurements. The weight layout and numerical arguments are in the appendix. Reference tests require the official source through \texttt{DSV41\_REF}; without it those tests are skipped. Historical logs remain available but are not used to estimate the fresh schedules' speedups.

\clearpage
\bibliography{references}
\bibliographystyle{iclr2027_conference}
\clearpage
\appendix
\section{Weight layout}
\label{app:format}
\subsection{Parameter accounting}
\label{app:parameter-count}
Unpacking the source tensor shapes gives 543.582 billion routed-expert weights, 196.614 billion Engram entries and 8.299 billion other text weights, for a total of 748.495 billion. The backbone excluding Engram is 551.881 billion weights, consistent with the model card's rounded 552B backbone count \citep{deepseek41flash}. Quantization scales describe the stored representation and are excluded from these counts. The optional draft network adds 14.225 billion weights; the vision encoder and aligner add 0.485 billion. Neither is needed in the text-generation runs.

An earlier 754.6B count adds 6.144 billion Engram scale entries to the text weights. We use ``750B-class'' to avoid treating that metadata as additional model parameters. The audit is reproducible with \path{scripts/parameter_audit.py} and saved in \path{logs/parameter_audit.json}. The 337.6\,GiB storage total below includes the optional components and quantization metadata actually stored on disk.

\subsection{Stored representation}
Each routed expert is one record of three projections: gate $W_1$ and up $W_3$ ($2304\times5120$) and down $W_2$ ($5120\times2304$). Each projection is stored as packed integers, per-group scales and per-group biases in MLX's affine format with group size 64, and the record is padded to a multiple of 16\,KiB. Table~\ref{tab:layers} gives the bit widths per layer. We chose them with \path{scripts/quant_damage.py}, which runs real layers once with the converted experts and once with the original FP4 experts and compares hidden states layer by layer. The first four layers, the layers that own KV caches or index keys, and the decoder band received 3 bits, the last layer 4 bits, and the rest 2-bit gate/up projections. The DSpark draft experts use 2-bit gate/up and 3-bit down projections.

\begin{table}[htbp]
\centering
\small
\caption{Expert bit widths ($W_1$/$W_3$/$W_2$) and record sizes.}
\label{tab:layers}
\begin{tabular}{@{}llr@{}}
\toprule
Layers & Bits & Record \\
\midrule
0--3, 8, 14, 20--38 & 3/3/3 & 15.5\,MB \\
4--7, 9--13, 15--19 & 2/2/3 & 12.5\,MB \\
39 & 4/4/4 & 19.9\,MB \\
\bottomrule
\end{tabular}
\end{table}

\begin{table}[htbp]
\centering
\small
\caption{Converted weights of DeepSeek-V4.1-Flash as stored on disk. Sizes are in GiB.}
\label{tab:format}
\begin{tabular}{@{}>{\raggedright\arraybackslash}p{0.40\linewidth}>{\raggedright\arraybackslash}p{0.28\linewidth}rl@{}}
\toprule
Component & Format & Size & Location \\
\midrule
Dense layers: attention, shared experts, routers, mHC, Engram projections & \texttt{mxfp8}, lossless & 6.9 & memory \\
Output head and other global tensors & affine 8-bit / BF16 & 0.7 & memory \\
Embedding & BF16, memory-mapped & 1.2 & SSD \\
Routed experts, $40\times384$ & affine 2/3/4-bit, $g=64$ & 208.3 & SSD \\
DSpark layers: dense parts / experts ($3\times128$) & \texttt{mxfp8} / affine 2/3-bit & 0.7 / 4.5 & SSD \\
Vision encoder & BF16 & 0.9 & SSD \\
Engram tables, $2\times384$M rows & affine 4-bit, $g=32$ & 114.4 & SSD \\
\midrule
Total on disk & & 337.6 & \\
\bottomrule
\end{tabular}
\end{table}

\clearpage
\section{Per-case API agreement}
\label{app:api}
\begin{table}[htbp]
\centering
\small
\caption{Historical teacher-forced checks against an older API snapshot (September 10 versus the October 17 checkpoint). NLL is the mean negative log-likelihood per continuation token. The macro-average gives the twelve one-token retrieval cases most of the weight; all-token agreement is 123/141 (87.2\%), and non-retrieval agreement is 111/129 (86.0\%). These are not same-revision quantization measurements.}
\label{tab:api}
\begin{tabular}{@{}lrrrrr@{}}
\toprule
Case & Prompt & Continuation & NLL (ours) & NLL (API) & Top-1 agreement \\
\midrule
Prose & 15 & 32 & 0.733 & 0.038 & 78.1\% \\
Code & 17 & 32 & 0.637 & 0.278 & 84.4\% \\
Arithmetic & 18 & 1 & 0.000 & 0.000 & 100\% \\
Prose & 21 & 32 & 0.244 & 0.115 & 90.6\% \\
Italian & 15 & 32 & 0.512 & 0.354 & 90.6\% \\
Retrieval ($\times12$) & 127--16{,}385 & 1 & $\le$0.006 & 0.000 & 100\% each \\
\midrule
Mean over 17 cases & & & 0.127 & 0.046 & 96.7\% \\
\bottomrule
\end{tabular}
\end{table}

\section{Capacity and buffer lifetime}
\label{sec:cache-budget}
The store couples its allocation to the maximum live working set. The plan accounts for
heterogeneous record sizes, sets the prefill batch size, and reserves slots for demand.
These guarantees concern capacity and buffer lifetime; next-layer prediction and demand-first
prefetching follow prior work \citep{moeoffload}.

Let layout $g$ have aligned record size $c_g$, serve $L_g$ layers, and receive $n_g$ slots.
Let $D$ be the largest decode routing set, including the draft model, and let $M$ be the
requested slab budget. We require $M\geq M_{\min}=4D\sum_g c_g$ and reject smaller budgets
before allocating. For $R=M-M_{\min}$, the initial plan is
\begin{equation}
n_g=4D+\left\lfloor\frac{R L_g}{\sum_h L_h c_h}\right\rfloor,
\qquad
B_g=\max\!\left(D,\min\!\left(16,\left\lfloor n_g/4\right\rfloor\right)\right).
\end{equation}
Rounding leftovers buy additional slots only when their full record size fits. Thus
$\sum_g n_gc_g\leq M$ and $4B_g\leq n_g$. Prefill uses batches of at most $B_g$ experts;
decode requests already fit because $B_g\geq D$. The budget covers record capacity and bounds
the array payloads, not dense weights, KV caches, allocator overhead or process RSS.

\paragraph{Admission and leases.} Two previous request groups remain pinned while lazy GPU
work may reference them. An optional read is admitted only if (i) the number of unfinished,
unleased reads remains at most $n_g-3B_g$, and (ii) at least $B_g$ slots remain free or
immediately evictable. Condition (i) is needed even at cold start, when the pin sets are empty.
Demand acquisition starts reads and pins existing hits in the same critical section. The
outer decode lease covers both the resident and missing expert groups; each inner acquisition
transfers protection to the lazy pin sets before releasing its lease. A failed split read is
not considered finished until all of its writers stop.

\begin{proposition}
Under the single-compute-thread protocol, with requests of at most $B_g$ distinct experts and
at most two previous pinned groups, admitted optional reads cannot cause demand allocation to
run out of slots. Every acquired slot remains protected until its lease or lazy-compute pin
is released.
\end{proposition}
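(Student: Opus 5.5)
The argument works per layout $g$, since slots of different layouts are never shared. It proves an invariant that holds at every point where the single compute thread or an I/O completion changes slot state. Partition the $n_g$ slots into five classes: slots pinned by the two previous lazy groups ($P$), slots held by the current demand lease ($H$), slots with unfinished unleased optional reads ($U$), free slots ($F$), and resident-but-evictable slots ($E$). Every request group has at most $B_g$ distinct experts, and at most two previous groups stay pinned, so $|P|\le 2B_g$ and $|H|\le B_g$. Admission condition (i) gives $|U|\le n_g-3B_g$ at all times. Optional reads only enter $U$ through admission, and completion moves a slot from $U$ to $E$, which can only help the bound.

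The capacity claim follows by counting. When a demand acquisition for a group of at most $B_g$ experts begins, its $H$ is empty (the previous lease has been transferred to the pin sets). The slots outside $P\cup U$ then number at least $n_g-2B_g-(n_g-3B_g)=B_g$. Each such slot is either free, evictable, or already holds a hit for the current group. A hit needs no new slot, and a miss needs a slot from $F\cup E$. So the number of misses is at most $B_g$ minus the number of hits, which is at most $|F\cup E|$, and the demand group always finds slots. The plan's guarantee $4B_g\le n_g$ makes $n_g-3B_g\ge B_g\ge 0$, so condition (i) is satisfiable and the pool is nonempty even at cold start. Condition (ii) is not needed for the bound, but I will note that it keeps an admitted read from consuming the last $B_g$ free-or-evictable slots between checks. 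The other requirement is that hit-pinning and miss-allocation happen in one critical section. Otherwise a concurrent eviction could turn a counted hit into a miss. This atomicity is exactly what the protocol states.

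The protection claim is a handoff argument. A slot is evictable only if it lies in none of the lease, the pin sets, or the set of unfinished reads. I show each transition preserves membership in at least one of these until the intended release. Demand acquisition places the slot in $H$ atomically. The inner acquisition transfers it into a pin set before releasing its lease, so there is no window in which it is in neither. A pin set is dropped only after the two following groups have been pinned, which by assumption is after the lazy GPU work referencing it is safe. For split reads, "finished" is defined as all writers having stopped, so a slot in $U$ cannot be recycled while a writer may still touch it. With a single compute thread these steps are linearly ordered, so induction over the event sequence gives the invariant.

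The main obstacle is the boundary between optional and demand slots. An optional read that becomes a demand hit mid-flight must move from $U$ into $H$ without being counted twice or dropped from protection. I would handle this by requiring that leasing an in-flight read removes it from the "unleased" count in (i) in the same critical section. With that rule, every slot belongs to exactly one class at all times, and the counting above holds.
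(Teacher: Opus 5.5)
Your proposal is correct and follows the paper's own proof. It uses the same count: at most $2B_g$ pinned slots plus at most $n_g-3B_g$ unfinished unleased reads leave at least $B_g$ free or evictable slots, and each hit reduces the misses one-for-one. It also uses the same protection mechanisms: one lock for allocation and lease publication, eviction that excludes leases and pins, and the inner-lease-to-pin handoff.

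The one ingredient the paper states that you leave implicit is that nested leases are reference-counted, so releasing an inner subgroup cannot drop the outer decode lease on experts not yet handed to the pin sets. Relatedly, your five classes are not actually disjoint: a slot is both leased and pinned during the handoff. This breaks nothing, because your counting only uses union bounds.
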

\begin{proof}
At a demand boundary, at most $2B_g$ slots are pinned and at most $n_g-3B_g$ contain unfinished
unleased reads. At least $B_g$ slots are therefore free or evictable. Protecting resident
members of the incoming request consumes no more than one slot per member and reduces the
number of additional slots needed by the same amount. The remaining misses fit. Allocation
and lease publication use one lock, and eviction excludes both leases and pins. Nested
leases are reference-counted, so releasing a subgroup cannot release the outer request.
\end{proof}
This is a capacity and lifetime guarantee. It does not bound I/O latency, cancel an SSD read,
or assert that a wrong prediction never displaces an old, unpinned cache entry.

\paragraph{Allocation audit.} The former allocator imposed a minimum of 96 slots separately
on every layout, after proportionally dividing the requested budget. This exceeded that
budget on the production manifest. Table~\ref{tab:budget-audit} reports a deterministic
calculation from that manifest, saved in \path{logs/cache_budget_audit.json}; these are not
RSS measurements. Regression tests exercise all experts at the minimum capacity, hold
speculative reads unfinished across successive demand batches, force the former hit-eviction
race, and inject both whole-record and split-read failures.

\begin{table}[htbp]
\centering\small
\caption{Aligned expert-record capacity in GiB. The revised allocation obeys the requested
budget; the historical timings elsewhere used the former allocator.}
\label{tab:budget-audit}
\begin{tabular}{@{}rrr@{}}
\toprule
Requested & Former allocator & Revised allocator \\
\midrule
2.0 & 4.285 & 1.993 \\
2.5 & 4.458 & 2.493 \\
3.5 & 5.159 & 3.500 \\
4.5 & 6.118 & 4.495 \\
5.0 & 6.610 & 4.995 \\
5.5 & 7.091 & 5.499 \\
6.5 & 8.061 & 6.494 \\
\bottomrule
\end{tabular}
\end{table}

\section{Indexer arithmetic and screening}
\label{app:indexer}
Long prompts take minutes on a laptop, and part of that time goes to the indexer. For a query with indexer heads $\q_1,\dots,\q_H\in\RR^{d}$ and signed head weights $w_1,\dots,w_H$, the reference scores each visible compressed key $\kv_j$ as
\begin{equation}
s_j \;=\; \bfround\!\Big(\sum_{h=1}^{H} \bfround\big(\relu{\bfround(\q_h^\top\kv_j)}\, w_h\big)\Big),
\label{eq:score}
\end{equation}
with the outer sum accumulated in FP32, and keeps the best $K$ keys. Before candidate filtering, scoring all visible query--key pairs grows quadratically with the prompt length. We studied two ways to make it cheaper: faster scoring, with equality under the conditions below, and screening keys before fully scoring them.

\subsection{Faster scoring with the same results}
\paragraph{BF16 products.} The reference computes each product $\q_h^\top\kv_j$ in FP32 and then rounds it to BF16. Both inputs are FP4 values with power-of-two scales, which BF16 represents exactly, and the following lemma shows when a BF16 matrix multiplication returns the same rounded result.
\begin{lemma}
\label{lem:exact}
Let $\q,\kv\in\RR^{128}$ hold E2M1 values with one power-of-two scale per block of 32 elements, each exactly representable as a finite BF16 value. Let $\sigma_b$ be the product of the two scales of block $b$. Assume that $\max_b\sigma_b/\min_b\sigma_b\le 2^{9}$, that $\tfrac14\min_b\sigma_b\ge 2^{-126}$, and that no partial sum overflows FP32. Then every partial sum of $\q^\top\kv$, in any order, is exactly representable in FP32, and a BF16 matrix multiplication that accumulates in FP32 and rounds once returns $\bfround(\q^\top\kv)$, the reference result.
\end{lemma}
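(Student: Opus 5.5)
The plan is to show that every term and every partial sum of $\q^\top\kv$ lies on a common fixed-point grid. We then show that every value on that grid, up to the relevant magnitude, fits in FP32's 24-bit significand, so no accumulation step ever rounds.

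\textbf{The grid.} E2M1 values are $\{0,\pm0.5,\pm1,\pm1.5,\pm2,\pm3,\pm4,\pm6\}$, all integer multiples of $\tfrac12$. So within block $b$ the product of the two FP4 elements is an integer multiple of $\tfrac14$ with magnitude at most $36$. Multiplying by the scales, the product $q_ik_i$ is an integer multiple of $\tfrac14\sigma_b$, bounded in magnitude by $36\sigma_b$.

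Let $\sigma_{\min}=\min_b\sigma_b$ and $u=\tfrac14\sigma_{\min}$. Since all scales are powers of two, each $\sigma_b/\sigma_{\min}$ is a power of two $\ge1$. Hence every product, and therefore every partial sum in any order and over any subset of the 128 terms, is an integer multiple of $u$.

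A first remark: each individual product is exact in BF16 and FP32. The FP4 significands have at most 2 bits, so their product has at most 4 bits. The scale conditions keep the product normal or representable, and overflow is excluded by hypothesis.

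\textbf{Magnitude bound.} Any partial sum $S$ satisfies
\[
|S|\le \textstyle\sum_i |q_ik_i| \le 128\cdot 36\,\sigma_{\max}\le 128\cdot36\cdot 2^9\sigma_{\min}.
\]
In units of $u$, this gives
\[
|S|/u \le 4\cdot128\cdot36\cdot2^9 = 36\cdot 2^{18} < 2^{24}.
\]
The last inequality holds because $36<64=2^6$, so the bound is below $2^{24}$. An integer $N$ with $|N|<2^{24}$, times a power of two $u$, is exactly representable in FP32 provided $u$ does not fall below the subnormal limit. The hypothesis $\tfrac14\sigma_{\min}\ge2^{-126}$ makes $u$ at least the smallest normal number, which is the purpose of that assumption. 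Subnormals would work as well, but the stated bound suffices. An upper overflow is ruled out by hypothesis. So every partial sum is exact, and FP32 accumulation in any order, including blocked or tree reductions inside a GEMM kernel, returns exactly the real number $\q^\top\kv$.

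\textbf{Agreement with the reference.} The reference computes the same dot product in FP32, so it is also exact by the same argument, and then applies one rounding to BF16. The BF16 path feeds exactly representable inputs, because each scaled E2M1 value is a finite BF16 value by assumption. It multiplies exactly and accumulates exactly in FP32 by the above. It then rounds once. Both paths therefore equal $\bfround(\q^\top\kv)$ under the same rounding mode.

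\textbf{Main obstacle.} The delicate step is the lattice and bit-count argument. I need to confirm that the worst case is $36\cdot2^{18}$ units, with the per-element product bound $6\cdot6$ combined against the scale ratio $2^9$ and the 128 terms. I also need to state carefully that exactness concerns the real-number value of the partial sum, independent of the order chosen by the hardware. A secondary point is making explicit that the kernel multiplies in at least FP32 precision and does not pre-round its products; this is part of what ``accumulates in FP32'' must mean.
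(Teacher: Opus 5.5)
Your proposal is correct and follows the paper's proof: both place every product and every partial sum on the grid of integer multiples of $\tfrac14\sigma_{\min}$, bound the magnitude in those units below $2^{24}$, and conclude that every FP32 addition is exact in any order, so a single rounding yields $\bfround(\q^\top\kv)$. Your constant $36\cdot 2^{18}$ is slightly sharper than the paper's $2^{15}\cdot 2^{9}$. Your aside that subnormals ``would work as well'' holds only under IEEE gradual underflow, not on kernels that flush denormals to zero, which is why the paper keeps the normal-range assumption.
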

Appendix~\ref{app:exact} gives the proof. MLX's BF16 matrix kernels accumulate in FP32, although we checked this empirically rather than for every kernel variant. Where the scale condition fails, both computations are ordinary FP32 sums whose last bit can depend on the order of summation, just as it can between different GPU kernels running the reference. On indexer inputs captured from a real 32K-token prompt, the BF16 version reproduced every one of about a billion scores.

\paragraph{Fused head reduction.} After the matrix multiplication, Equation~\eqref{eq:score} still needs a ReLU, a weighting and a rounding for every element, followed by a sum over heads. Written as array operations, this reads a large intermediate tensor three times. We replaced it with a single Metal kernel in which each thread handles one pair of query and key and walks over the heads, preserving the per-head roundings. On every input we tested it matches the array version. Both sum the heads in FP32, but a different summation order need not be identical for all possible inputs. Figure~\ref{fig:indexer}a reports the measured block timings and equality checks.

\paragraph{Fixed tie order.} A partition-based Top-$K$ leaves ties at the boundary to the kernel. We sort by a 64-bit key built from the BF16 score and the position, so that ties go to the earlier position; NaN scores are mapped to negative infinity. Selection thereby becomes a pure function of the scores, which an exact screening method must also reproduce.

\subsection{Skipping keys: an exploratory screening study}
\label{sec:dosa}
DoSA \citep{dosa} skips keys while keeping the selection exact. A seed set of keys gives a lower bound $\tau$ on the $K$-th best score, a cheap upper bound $U_j$ is computed for every other key, and keys with $U_j<\tau$ are dropped before full scoring. DoSA reports large speed-ups on GLM-5.2 and DeepSeek-V3.2 at contexts of half a million tokens and more. V4.1-Flash differs in three ways that could matter: it has fewer indexer heads, a smaller $K$, and its own two-level candidate selection. We therefore measured how well such a bound works on this model.

\paragraph{Setup.} We captured the indexer inputs of a real 32K-token prompt and studied every index layer except the one that chooses the candidate blocks, which needs all block maxima and cannot skip keys. For each layer we used the last decode query and sixteen queries from the prompt. A key is \emph{eligible} for a query if the query can see it and, in the later layers, if it lies inside the query's candidate blocks. As in DoSA, the bound evaluates the positive heads $P$ and the strongest negative heads $E$ exactly, and merges the remaining negative heads into a single term:
\begin{equation}
U_j = \sum_{h\in P} w_h\relu{\q_h^\top\kv_j} - \sum_{h\in E} |w_h|\relu{\q_h^\top\kv_j} - \relu{\textstyle\sum_{h\in N\setminus E}|w_h|\,\q_h^\top\kv_j}.
\end{equation}
This is a real-arithmetic bound because $\sum_i \relu{x_i}\ge\relu{\sum_i x_i}$. The historical script added heuristic BF16 slack and found no violation on its captured queries. That slack is not a certificate for Equation~\eqref{eq:score}: three positive heads with dot product 64.75 and weight 0.31640625 produce an exact rounded score of 62, above the heuristic upper value 61.9421. This implementation does not have a general exactness guarantee. The following costs describe the captured data only, and screening remains disabled. Its cost counts the fully scored seed keys, the dot products of the bound for every other eligible key, and the remaining heads for every key that survives.

\paragraph{Results.} Figure~\ref{fig:indexer}b and Table~\ref{tab:dosa} show the cost at the best head budget for each layer. A useful bound needs about two thirds of the heads, and even then screening saves only a tenth to a fifth of the indexer's work, before the overhead of compacting the survivors on the GPU. DoSA's joint bound for omitted positive heads did not drop a single additional key on this model, and a low-rank bound built from the leading singular vectors of the query heads failed as well, because those singular values decay slowly. Two properties of the model explain the difference from DoSA's setting. At these context lengths, $K$ is a large fraction of the keys, so the threshold $\tau$ sits low in the score distribution. And the model's own candidate selection already removes many keys before the indexer score matters. Arithmetic savings alone do not establish an end-to-end benefit once bounds, compaction and selection are included. Screening remains disabled in the runtime. The picture may change at contexts of 128K tokens and more, but prefilling such a prompt takes hours on this machine.

\begin{figure}[H]
\centering
\includegraphics[width=\linewidth]{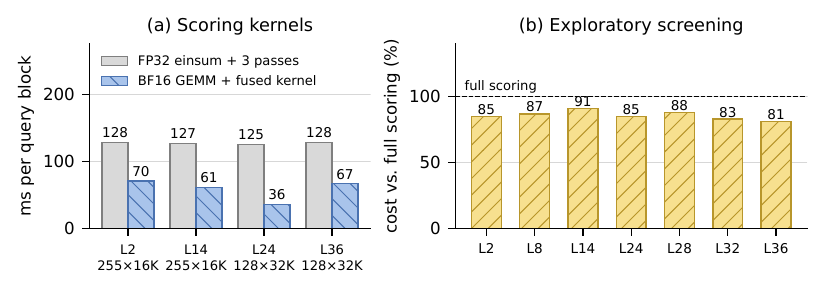}
\caption{(a) Indexer time per prefill query block on real inputs, before and after the BF16 matmul and the fused head reduction. Labels give the layer and the block shape as rows times keys. Scores and selected keys are identical on these blocks. The matmul alone runs 1.5 to 1.9 times faster and the head reduction 5 to 6 times faster. (b) Historical cost of the heuristic screening bound relative to fully scoring the eligible keys, per index layer, at the best head budget. The heuristic was checked on captured inputs but is not generally sound under BF16 rounding; the cost includes scoring the seed keys and is averaged over 17 queries of a 32K-token prompt with $K=512$.}
\label{fig:indexer}
\end{figure}

\subsection{Screening measurements}
\label{app:dosa}
Table~\ref{tab:dosa} lists, for each index layer, the best configuration of the bound averaged over 17 queries (the last decode query and 16 prefill queries), with up to 4096 of the most recent eligible keys as the seed set. The BF16 allowance is heuristic, so this table describes observed arithmetic cost rather than certified pruning. $e$ is the number of negative heads evaluated exactly, $m$ the resulting number of dot products per key (positive heads, $e$ negative heads and the merged term; it varies slightly between queries), and the cost includes scoring the seed keys. \path{scripts/dosa_study.py} reproduces the table and checks that no dropped key belongs to the true Top-$K$.

\begin{table}[H]
\centering
\small
\caption{Best screening configuration per layer (32K-token prompt; layers 2--14 see 16K compressed keys, layers 24--36 see 32K keys, of which about 16K are eligible). Survive is the fraction of the non-seed eligible keys whose bound is not below $\tau$.}
\label{tab:dosa}
\begin{tabular}{@{}lrrrr@{}}
\toprule
Layer & $e$ & $m$ (mean) & Survive & Cost vs.\ full scoring \\
\midrule
2 & 0 & 24 & 18.1\% & 85\% \\
8 & 0 & 23 & 31.7\% & 87\% \\
14 & 4 & 26 & 24.4\% & 91\% \\
24 & 4 & 21 & 38.5\% & 85\% \\
28 & 0 & 22 & 40.0\% & 88\% \\
32 & 0 & 24 & 8.9\% & 83\% \\
36 & 0 & 22 & 19.9\% & 81\% \\
\bottomrule
\end{tabular}
\end{table}

\FloatBarrier
\begin{samepage}
\subsection{Exact accumulation for FP4 inputs}
\label{app:exact}
E2M1 magnitudes are $\{0, 0.5, 1, 1.5, 2, 3, 4, 6\}$, i.e., integers in $\{0,\dots,12\}$ times $2^{-1}$. In block $b$ with scales $s_{q,b}$ and $s_{k,b}$, each product $q_ik_i$ is therefore an integer of magnitude at most $144$ times $\tfrac14\sigma_b$, where $\sigma_b=s_{q,b}s_{k,b}$ is a power of two. Let $\sigma_{\min}$ and $\sigma_{\max}$ be the smallest and largest block products. Every product, and hence every partial sum over any subset of the 128 terms, is an integer multiple of $\tfrac14\sigma_{\min}$. Its magnitude is at most $128\cdot144\cdot\tfrac14\sigma_{\max}<2^{15}\cdot\tfrac14\sigma_{\max}$. Expressed in units of $\tfrac14\sigma_{\min}$, a partial sum is an integer below $2^{15}\,\sigma_{\max}/\sigma_{\min}\le2^{24}$, which FP32's 24-bit significand represents exactly. Hence every FP32 addition in any order is exact, the accumulated value equals the real dot product, and the single final rounding to BF16 equals $\bfround(\q^\top\kv)$. This assumes that the block products stay in the normal FP32 range, i.e.\ $\tfrac14\sigma_{\min}\ge2^{-126}$ and no overflow. The quantizer clamps each scale at $2^{-126}$, so the assumption can fail for blocks of near-zero activations; on our captured inputs the scale exponents lie roughly between $-8$ and $4$. \hfill$\square$

\end{samepage}
\subsection{Top-\texorpdfstring{$K$}{K} order key}
\label{app:order}
For a BF16 score with bit pattern $b$ (16 bits), the key $\kappa(b) = 2^{16}-1-b$ if the sign bit is set and $b+2^{15}$ otherwise is monotone in the score value, after mapping $-0$ to $+0$. For position $j<2^{31}$ we select the $K$ largest values of $\kappa(b)\cdot2^{32} + (2^{31}-1-j)$, a 64-bit integer, with a partition. This yields the order ``higher score first, then earlier position'' regardless of the partition kernel. In the candidate-block level the query's own block has score $+\infty$. After NaNs are replaced by $-\infty$, this block ranks ahead of every finite competitor.

\section{Measurement protocol and evidence}
\label{app:measurement}
\subsection{Fresh end-to-end requests}
The generation suite uses \path{scripts/paper_benchmark.py} with a requested 4.5\,GiB cache, full prefill, greedy sampling and a limit of 128 output tokens for each of three short requests. The two retrieval requests have a 16-token output limit. The expert slab is cleared before each request; the host page cache is not purged. Prompt construction and tokenization precede the request timer. Model construction is timed separately, while work evaluated lazily on the first request is included in its latency.

Time to first token is the interval from calling the stream to receiving its first output. Decode throughput is $(n-1)/(t_n-t_1)$ for $n$ emitted tokens, including the terminal token if emitted. It therefore excludes prefill and does not count a final forward pass whose output is never used. The log retains every output id, complete text and intertoken interval. Retrieval throughput is omitted from the main table because its answers are too short to characterize sustained decoding.

A background thread samples macOS process resident size and physical footprint every 0.5 seconds through \texttt{proc\_pid\_rusage}. The log also records the process RSS high-water mark and MLX's peak allocation counter. The latter is reset before each request and includes allocations managed by MLX, not all process memory. Sampled maxima may miss short-lived peaks. Physical footprint and RSS are different accounting measures and are not added together.

Host-wide swap usage, VM counters and thermal status are recorded before and after each request. Swap usage rises from \SwapStart{}\,GiB before model loading to \SwapEnd{}\,GiB after the suite. Swap-in and swap-out counter deltas count pages transferred across the entire machine, not bytes attributable to the model. Other applications remain open, so these measurements describe execution in a shared desktop environment. No claim of zero swap or isolated-device throughput follows from them.

\subsection{Matched schedules}
The schedule suite uses three prompts, three rounds, eight serial warmup tokens and twelve measured tokens per prompt--round block. Every configuration restores the same prompt snapshot, clears the expert cache, and replays that warmup before consuming the same continuation. The starting cache fingerprint and generated ids must match. Schedule position rotates across rounds. The workload contains 108 measured tokens per prompt across all schedules and 324 measured tokens in total.

Expert-read volume counts the record payload, excluding alignment padding, for both required and speculative reads submitted in the measured region. A final submission barrier attributes outstanding speculative submissions to that region; completion of optional work outside the timed forward calls is not charged as token latency. Cache reset drains outstanding reads before the next configuration. Thus the throughput measures forward-call latency, and the byte counter also exposes speculative work that may continue beyond it.

\subsection{Commands and provenance}
The two suites are run sequentially from the repository root, using one model process at a time:
\begin{quote}\small
\begin{verbatim}
python scripts/paper_benchmark.py --suite generation \
  --cache-gb 4.5 --tokens 128 --replay full \
  --out paper/logs/laptop_generation.json

python scripts/paper_benchmark.py --suite matched \
  --cache-gb 4.5 --tokens 12 --rounds 3 --warmup 8 \
  --replay full --out paper/logs/laptop_matched.json
\end{verbatim}
\end{quote}
The logs contain the operating-system and MLX versions, model configuration, metadata hashes, actual expert-buffer capacity and a SHA-256 fingerprint of the runtime and benchmark sources. The matched protocol also saves its per-block data in \path{laptop_matched.matched.json}. Figure and table generation reads these files directly. Prompts, token ids and source fingerprints allow changes in workload or implementation to be distinguished from changes in the machine's state.

\subsection{Historical evidence}
Earlier measurements used an allocator whose per-layout floor exceeded the requested global capacity. A run labeled 5.5\,GiB actually reserved 7.091\,GiB of expert records and observed 1.23 tokens/s. Its schedule comparisons advanced a shared generation, so configurations processed different tokens and inherited cache effects. Those records remain in the repository, but their reported 11--27\% improvements are not used as controlled estimates in this paper. Appendix~\ref{sec:cache-budget} preserves the allocation audit.

The historical API comparison used continuations sampled on September 10, 2026 and the converted October 17 checkpoint. Its per-case table is retained in Appendix~\ref{app:api}; the version mismatch prevents isolating quantization loss. Indexer measurements in \path{bench_kernels.log} and \path{dosa_study_v2.log} are separate captured-input experiments, not fresh end-to-end benchmarks.

Runtime regression validation with the official reference passes 78 tests. Eight sampling and converter cases fail against the earlier \texttt{a4df33c} package, and the former allocation overshoot and hit-eviction race are reproducible. Their logs are stored as \path{runtime_regression_baseline.log} and \path{runtime_regression_fixed.log}. These checks support the revised implementation; they are not throughput measurements.

\end{document}